\newtheorem{theorem}{Theorem}
\begin{document}
\title{On the Sum Rate of a 2 $\times$ 2 Interference Network}

\author{Murali Sridhar,	\and 	Srikrishna Bhashyam \thanks{This work was done at the Department of Electrical Engineering, IIT Madras, Chennai, India. Murali Sridhar is with Intel Technology India Pvt. Ltd., Bangalore. Srikrishna Bhashyam is with the Department of Electrical Engineering, IIT Madras.}}
\maketitle

\begin{abstract}
In an $M \times N$ interference network, there are $M$ transmitters and $N$ receivers with each transmitter having independent messages for each of the $2^{N}-1$ possible non-empty subsets of the receivers. We consider the 2 $\times$ 2 interference network with 6 possible messages, of which the 2 $\times$ 2 interference channel and $X$ channel are special cases obtained by using only 2 and 4 messages respectively. Starting from an achievable rate region similar to the Han-Kobayashi region, we obtain an achievable sum rate. For the Gaussian interference network, we determine which of the 6 messages are sufficient for maximizing the sum rate within this rate region for the low, mixed, and strong interference conditions. It is observed that 2 messages are sufficient in several cases. 
\end{abstract}

\section{Introduction}
The Interference Network (IN) was introduced by Carleial \cite{carleial} as a multi-terminal communication problem involving $M$ transmitters and $N$ receivers with each transmitter having independent messages for each of the $2^{N}-1$ possible non-empty subsets of the receivers. Thus, a total of $M(2^N-1)$ messages are transmitted across the channel leading to a $M(2^N-1)$ dimensional capacity region. The multiple access channel (MAC), broadcast channel (BC), interference channel (IC), and $X$ channel are all special cases of the Interference network (IN). For example, when $M=N$ and transmitter $k$ is interested in communication with only receiver $k$, we have the $M$ user IC. In the two user $X$ channel, each transmitter $Tx_i, i \in \{1,2\}$ has 2 \textit{independent} messages corresponding to the two receivers, i.e., four messages in total. 

The IC has been studied extensively in \cite{carleial,annapureddy,HK,MotKha09,ChoMotGarElg08,Kra04,EtkTseWan08,Sas04}. While the capacity region is unknown, several inner and outer bounds have been derived for the capacity region and the sum capacity \cite{HK,ChoMotGarElg08,Kra04,EtkTseWan08,Sas04}. Under some channel conditions (or interference conditions), capacity or sum capacity has been determined \cite{carleial,annapureddy,HK,MotKha09}. The $X$ channel has been studied in \cite{chiachi,HuaCadJaf09,JafSha08,MadMotKha08,CadJaf09,Hesham1} to obtain capacity region bounds and generalized degrees of freedom. 

Using all $M(2^N-1)$ messages has been observed to be important when interference networks arise as states in a half-duplex relay network \cite{MutBhaTha11,MutBhaTha}. In half-duplex relay networks, the set of transmitters and receivers at any given time instant form an interference network. The choice of rates for the  $M(2^N-1)$ messages depends on the overall information flow constraints. Therefore, a characterization of the $M(2^N-1)$ dimensional rate region is useful in flow optimization. The messages that result in optimal flow will depend on the connectivity and the channel conditions of the links. In the context of $X$ channels, it has been seen that using 2 messages is sum rate optimal under a subset of low and strong interference conditions \cite{chiachi,chiachi2}. We consider the more general IN and determine which of the 6 messages are useful for all interference conditions.  

The achievable rate region obtained using Han-Kobayashi type public-private message splitting of the 4 messages on the $X$ channel in \cite{Hesham1} provides an achievable rate region for the 2 $\times$ 2 IN. In this paper, we obtain the following results: (1) Starting from an achievable rate region in  \cite{Hesham1}, we first obtain an achievable sum rate of the 2 $\times$ 2 IN. (2) For the Gaussian interference network, we determine which of the 6 messages are sufficient for maximizing the sum rate within this rate region for the low, mixed, and strong interference conditions. It is observed that 2 messages are sufficient in several cases.

\section{Two user Discrete memoryless IN (DMIN)}
The 2 $\times$ 2 DMIN shown in Fig. \ref{2x2} is a communication model where there are 3 messages from each transmitter. The messages from $Tx_1$ are:
\begin{enumerate}
\item Direct private message $U_1$ to $Rx_1$.
\item Common message $V_1$ to both receivers $\{Rx_1,Rx_2\}$.
\item Cross private message $W_1$ to $Rx_2$.
\end{enumerate}
\begin{figure}[ht]
\begin{center}
\resizebox{2.3in}{!}{\input{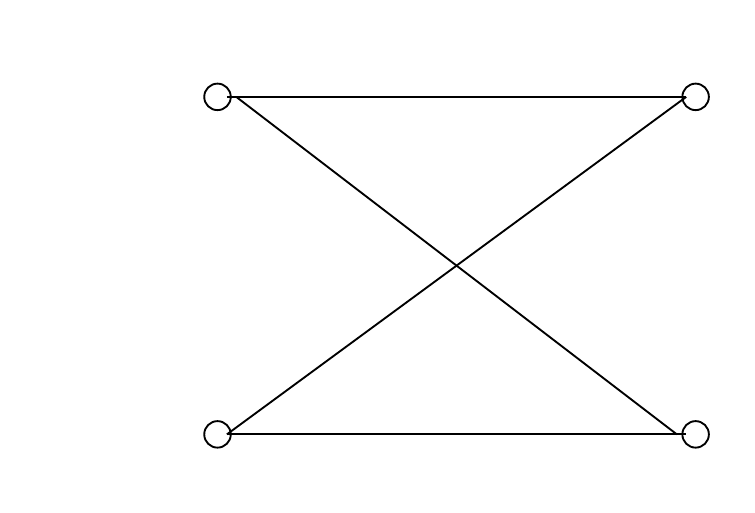_t}}
\end{center}
\caption{2 $\times$ 2 Interference Network}
\label{2x2}
\end{figure}
Similarly the messages $U_2, V_2$ and $W_2$ originate from $Tx_2$ communicating with $Rx_2$, $\{Rx_1, Rx_2\}$ and $Rx_1$ respectively. The receiver $Rx_1$ will decode 4 messages namely $U_1,V_1,V_2$ and $W_2$. Similarly, $Rx_2$ will decode $U_2,V_1,V_2$ and $W_1$.

Although an achievable rate region for the two user DMIN has not been
explicitly reported, we can see that Han-Kobayashi (HK) \cite{HK} type
message splitting on the $X$ channel, given in \cite{Hesham1},
addresses the same problem as the IN. The HK \cite{HK} scheme,
originally proposed for two user IC, allows \textit{partial} decoding
of \textit{interference} at the unintended receiver so that a
\textit{common} part of the interference can be decoded (and
subtracted) leading to better reception of its intended signal. The
intended receiver decodes a private message, which cannot be decoded
at the other receiver, and also decodes \textit{this} common message
combining them to form its \textit{total} message. In \cite{Hesham1}, HK message splitting is applied to each of the 4 messages of the $X$ channel leading to 8 ($4\times 2$) messages and an achievable region is given. It is
easy to see that 2 \textit{public} messages originating from each
transmitter can be clubbed together as a single \textit{public}
message, resulting in a total of 6 messages. Here, we present an achievable rate region below for the 6 IN messages.\newline Let $Z=QU_1V_1W_1X_1U_2V_2W_2X_2Y_1Y_2 \in \Omega$,where $\Omega$ is the
set of all probability distributions over the
variables. $U_1,V_1,W_1,U_2,V_2,W_2$ are auxiliary random variables
and $X_1,X_2,Y_1,Y_2$ are random variables on
${\cal{X}}_1,{\cal{X}}_2,{\cal{Y}}_1,{\cal{Y}}_2$ respectively
satisfying:
\begin{enumerate}
\item $U_1, V_1, W_1, U_2, V_2, W_2$ are mutually independent given $Q$,the time sharing random variable.
\item $X_1 = f_1(U_1, V_1, W_1|Q)$, $X_2 = f_2(U_2, V_2, W_2|Q)$, where $f_1$ and $f_2$ are deterministic functions of their arguments.
\end{enumerate}

Let $\mathcal{R}(Z)$ denote the rate region formed by the six tuple rate $(R_{U_1}, R_{V_1}, R_{W_1}, R_{U_2}, R_{V_2}, R_{W_2})$ satisfying the following constraints:
\begin{equation}
R_{S_1} = \sum_{s \in S_1} R_s \leq I(S_1;Y_1|\bar{S}_1,Q) ~~~~ \forall S_1, \label{IN1}\\
\end{equation}
where $S_1$ is any non-empty subset of $M_1 = \{ U_1, V_1, V_2, W_2\}$, and $\bar{S}_1 = M_1 \backslash S_1$. Since there are 15 possible subsets $S_1$, we have 15 constraints. Similarly, considering $Rx_2$, we get another 15 constraints corresponding to each non-empty subset $S_2$ of $M_2 = \{ U_2, V_1, V_2, W_1\}$. For example, one of 30 constraints is $R_{U_2} + R_{V_2} + R_{W_1} \leq I(U_2, V_2, W_1;Y_2|V_1, Q)$.

Let $\mathcal{R}_{IN}$ be the closure of $\bigcup_{Z\in \Omega}\mathcal{R}(Z)$. Then, any rate tuple in $\mathcal{R}_{IN}$ is achievable for the two user DMIN. The proof of achievability uses jointly typical decoding and is similar to the proof in \cite{HK}.

\section{Achievable sum rate}
Let the sum rate $S=R_{U_1} + R_{V_1} + R_{W_1} + R_{U_2} + R_{V_2} + R_{W_2}$.

\begin{theorem} The achievable sum rate $S$ is bounded as follows. 
\begin{equation}
S  \leq \min\{ T_1, T_2, T_3, T_4\},
\label{sumratebounds}
\end{equation}
where
\begin{eqnarray}
\nonumber
T_1 &=&I(U_2,V_1,V_2,W_1;Y_2|Q)+I(U_1,W_2;Y_1|V_1,V_2,Q),\nonumber\\
T_2 &=&I(U_1,V_1,V_2,W_2;Y_1|Q)+I(U_2,W_1;Y_2|V_1,V_2,Q),\nonumber\\
T_3 &=&I(U_1,V_2,W_2;Y_1|V_1,Q)+I(U_2,V_1,W_1;Y_2|V_2,Q).\nonumber\\
T_4 &=&I(U_1,V_1,W_2;Y_1|V_2,Q)+I(U_2,V_2,W_1;Y_2|V_1,Q),\nonumber
\end{eqnarray}
\end{theorem}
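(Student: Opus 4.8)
The plan is to derive each of the four bounds $S\le T_i$ simply by adding two of the thirty inequalities \eqref{IN1} defining $\mathcal{R}(Z)$: one from the fifteen $Rx_1$ constraints and one from the fifteen $Rx_2$ constraints, chosen so that on the left-hand side every one of the six rate variables is counted exactly once. Then the sum of the left-hand sides equals $S$ while the sum of the right-hand sides is $T_i$. Since this argument is valid for every $Z\in\Omega$ and $\mathcal{R}_{IN}$ is the closure of $\bigcup_{Z\in\Omega}\mathcal{R}(Z)$, the bound \eqref{sumratebounds} then holds throughout $\mathcal{R}_{IN}$, which is what the theorem asserts.

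To make the left-hand sides cover each variable once, I would take a subset $S_1\subseteq M_1$ with $U_1,W_2\in S_1$ and a subset $S_2\subseteq M_2$ with $U_2,W_1\in S_2$ (these force $U_1,W_2,U_2,W_1$ to appear exactly once, since $U_1,W_2\notin M_2$ and $U_2,W_1\notin M_1$), and then require $V_1$ and $V_2$ each to lie in exactly one of $S_1,S_2$. There are exactly four such choices, determined by which of $S_1,S_2$ receives $V_1$ and which receives $V_2$. Putting both $V_1,V_2$ in $S_2$ gives $S_1=\{U_1,W_2\}$, $S_2=M_2$, and adding $R_{U_1}+R_{W_2}\le I(U_1,W_2;Y_1|V_1,V_2,Q)$ to $R_{U_2}+R_{V_1}+R_{V_2}+R_{W_1}\le I(U_2,V_1,V_2,W_1;Y_2|Q)$ yields $S\le T_1$. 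Symmetrically, putting both in $S_1$ gives $S\le T_2$; putting $V_1$ in $S_1$ and $V_2$ in $S_2$ gives $S_1=\{U_1,V_1,W_2\}$, $S_2=\{U_2,V_2,W_1\}$ and hence $S\le T_4$; and putting $V_1$ in $S_2$ and $V_2$ in $S_1$ gives $S_1=\{U_1,V_2,W_2\}$, $S_2=\{U_2,V_1,W_1\}$ and hence $S\le T_3$. In each case the conditioning sets $\bar S_1=M_1\setminus S_1$ and $\bar S_2=M_2\setminus S_2$ consist precisely of the complementary $V$-variables, matching the conditioning appearing in the corresponding $T_i$.

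Because only unit nonnegative combinations of already-valid inequalities are used, each of these four steps is immediate. The only point where more care is required — and the one I expect to be the main obstacle if one additionally wants that $\min\{T_1,T_2,T_3,T_4\}$ is the largest sum rate in $\mathcal{R}_{IN}$ rather than merely an upper bound — is showing that no other nonnegative combination of the thirty constraints (equivalently, no other branch of a full Fourier--Motzkin elimination of the six rate variables together with the nonnegativity constraints $R_s\ge 0$) produces a strictly smaller bound on $S$. This amounts to checking that combinations using more than one constraint per receiver, or unequal or fractional weights, are always dominated by one of $T_1,\dots,T_4$; that follows from the chain rule for mutual information and the fact that conditioning cannot increase $I(\,\cdot\,;Y_k|\,\cdot\,,Q)$ (submodularity of the set function $S_k\mapsto I(S_k;Y_k|\bar S_k,Q)$), but requires systematic bookkeeping over the $2^{4}-1$ subsets at each receiver. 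Read purely as an upper bound, however, \eqref{sumratebounds} is already fully established by the two-constraint argument above.
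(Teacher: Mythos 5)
Your proof is correct and takes essentially the same route as the paper: each bound $S \le T_i$ is obtained by adding one constraint from each receiver's group of fifteen, chosen so that the two index sets partition the six messages, and your enumeration of the four placements of $V_1,V_2$ reproduces exactly the paper's four combinations. The redundancy question you correctly defer (and which is not needed for the stated inequality) is handled in the paper by the observation that two constraints from the same receiver on disjoint subsets $C_1,C_2$ are dominated by the single constraint on $C_1\cup C_2$, so only one constraint per receiver ever needs to be used.
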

\begin{proof}
See Appendix \ref{claim1proof}.
\end{proof}

It is worth noting that the sum rate can be bounded by several
expressions using the constraints in $(\ref{IN1})$. For example, by
adding the bounds on $R_{U_1} + R_{V_1} + R_{V_2}$, $R_{U_2} + R_{W_1}$ and $R_{W_2}$ in three of the constraints, we can get a bound
for $S$. One can obtain similar bounds by several groupings of the 6 rate components of sum rate and adding the corresponding constraints from the rate region.  The sum rate is bounded by the minimum of all such bounds. In the proof, it is shown that show that only 4 of the
combinations are useful and the others are redundant.

\section{Gaussian Interference Network (GIN)}
The standard form for the Gaussian IN \cite{carleial} is  
\begin{eqnarray}
\nonumber
Y_1&=&X_1+h_2X_2+Z_1\\
Y_2&=&X_2+h_1X_1+Z_2
\label{gin}
\end{eqnarray}
where $Z_1,Z_2 \sim N(0,1)$. Power constraint $P_1,P_2$ are imposed on $Tx_1$ and $Tx_2$ respectively. The channel (or interference) conditions for the two user GIN can be classified into the following cases.
\begin{enumerate}
\item \textit{Low Interference (LI)}: $0\leq h_1\leq 1,0\leq h_2\leq 1$.
\item \textit{Mixed Interference (MI)}: $0\leq h_1\leq 1,h_2\geq 1$ or $0\leq h_2\leq 1,h_1\geq 1$.
\item \textit{Strong Interference (SI)}: $1\leq h_1^2\leq P_2+1,1\leq h_2^2 \leq P_1+1$.
\item \textit{Very Strong Interference (VSI)}: $h_1^2\geq P_2+1,h_2^2\geq P_1+1$.
\end{enumerate}
For the GIN, the DMIN rate region can be extended as follows. We consider the non-time sharing case, where $Q=\phi$, a constant. Further, we limit ourselves to $X_1=U_1+V_1+W_1,X_2=U_2+V_2+W_2$, where $U_1,V_1,W_1,U_2,V_2,W_2$ are independent Gaussian codebooks. Let us denote this rate region as $\mathcal{R}_{GIN}$. We employ superposition coding \cite{cover_BC} at the transmitters with power distribution defined as follows. Messages $U_i$, $V_i$, and $W_i$ are transmitted using powers  $\alpha_iP_i$, $\beta_iP_i$, and $\gamma_iP_i$, respectively, for $i = 1, 2$. Also, $\alpha_i+\beta_i+\gamma_i=1$. Let $I_1=1+h_2^2\alpha_2P_2+\gamma_1P_1$ and
$I_2=1+h_1^2\alpha_1P_1+\gamma_2P_2$. Let $C(x)=0.5\log_2(1+x)$. An
achievable rate region for the two user GIN is once again defined by
30 constraints. The 15 constraints corresponding equation
(\ref{IN1}) are given by:
\begin{eqnarray*}
\fontsize{8pt}{11pt}
R_{S_1} = \sum_{s \in S_1} R_s &\leq&C\left(\frac{\sum_s P_s}{I_1}\right),\nonumber\\
\end{eqnarray*}
where $P_{U_1} = \alpha_1 P_1$, $P_{V_1} = \beta_1 P_1$, $P_{V_2} = h_2^2 \beta_2 P_2$, and $P_{W_2} = h_2^2 \gamma_2 P_2$. Similarly, 15 more constraints can be written considering the rate constraints for $Rx_2$. Note that
\begin{center}
$\mathcal{R}_{GIN}\subseteq\mathcal{R}_{GIN_Q}\subseteq\mathcal{R}_{IN} \subseteq C_{IN}$,
\end{center}
where $\mathcal{R}_{GIN_Q}$ is the rate-region with optimal time sharing ($Q\neq\phi$) strategy and Gaussian input. $\mathcal{R}_{IN}$ is the optimal time sharing strategy with optimal input distribution and $C_{IN}$ is the capacity of the IN.

\section{Achievable Sum Rates in GIN}

In this section, we determine which of the 6 messages in the IN are
useful in maximizing the sum rate for the various interference
conditions. In \cite{chiachi}, a similar question was answered for the
$X$ channel in a subset of the low and strong interference regimes
extending the result for IC in \cite{annapureddy}. Since the sum
capacity of an IN is unknown, we first study the maximum sum rate
within the achievable rate region described in the previous
section as summarized in Table
\ref{results-table}. In Section \ref{sum-cap}, we show that the sum
capacity is indeed achieved for some mixed interference conditions.
\begin{table}[h]
\caption{Summary of Results}
\fontsize{9pt}{11pt}
\label{tabl}
\begin{tabular}{|l|c|l|}
\hline
\bfseries Region & \bfseries Sub-region & \bfseries Message-set\\
\hline
L.I &-& $U_1,V_1,U_2,V_2$\\\cline{2-3}
&$|h_1(1+h_2^2P_2)+h_2(1+h_1^2P_1)|\leq 1$ & $U_1,U_2$\\
\hline
M.I &$0\leq h_1\leq 1,h_2\geq 1$& $U_1,W_2$ \\\cline{2-3}
&$0\leq h_2\leq 1,h_1\geq 1$& $U_2,W_1$\\
\hline
S.I &-& $W_1,V_1,W_2,V_2$\\
\hline
V.S.I &-& $W_1,V_1,W_2,V_2$\\\cline{2-3}
&$|h_1^{-1}(1+P_2)+h_2^{-1}(1+P_1)|\leq 1$ & $W_1,W_2$\\
\hline
\end{tabular}
\label{results-table}
\end{table}

For the GIN, the terms $T_1$, $T_2$, $T_3$, and $T_4$ in the sum rate bound in equation (\ref{sumratebounds}) are:
\begin{eqnarray}
\nonumber
T_1=C\left(\frac{\alpha_1P_1+h_2^2\gamma_2P_2}{1+h_2^2\alpha_2P_2+\gamma_1P_1}\right) + C\left(\frac{\bar{\gamma}_2P_2+h_1^2\bar{\alpha}_1P_1}{1+h_1^2\alpha_1P_1+\gamma_2P_2}\right),&&\nonumber\\
T_2=C\left(\frac{\bar{\gamma}_1P_1+h_2^2\bar{\alpha}_2P_2}{1+h_2^2\alpha_2P_2+\gamma_1P_1}\right) + C\left(\frac{\alpha_2P_2+h_1^2\gamma_1P_1}{1+h_1^2\alpha_1P_1+\gamma_2P_2}\right),&&\nonumber\\
T_3=C\left(\frac{\alpha_1P_1+h_2^2\bar{\alpha}_2P_2}{1+h_2^2\alpha_2P_2+\gamma_1P_1}\right) + C\left(\frac{\alpha_2P_2+h_1^2\bar{\alpha}_1P_1}{1+h_1^2\alpha_1P_1+\gamma_2P_2}\right),&&\nonumber\\
T_4=C\left(\frac{\bar{\gamma}_1P_1+h_2^2\gamma_2P_2}{1+h_2^2\alpha_2P_2+\gamma_1P_1}\right) + C\left(\frac{\bar{\gamma}_2P_2+h_1^2\gamma_1P_1}{1+h_1^2\alpha_1P_1+\gamma_2P_2}\right),&&\nonumber
\end{eqnarray}
where $\bar{\alpha}_i = 1 - \alpha_i$ and $\bar{\gamma}_i = 1 - \gamma_i$.

\subsection{Mixed Interference}
There are two cases for Mixed Interference: (i) $0 \leq h_2 \leq 1,h_1
\geq 1$, and (ii) $0 \leq h_1 \leq 1,h_2 \geq 1$. 

\begin{theorem}
\begin{enumerate}
\item For case (i), the achievable sum rate is maximized by transmitting only $U_2$ and $W_1$, both to $Rx_2$. The sum rate achieved is the MAC sum capacity at $Rx_2$ $=C(h_1^2P_1 + P_2)$. 
\item For case (ii), the achievable sum rate is maximized by transmitting only $U_1$ and $W_2$, both to $Rx_1$. The sum rate achieved is the MAC sum capacity at $Rx_1$ $=C(h_2^2P_2 + P_1)$. 
\end{enumerate}
\end{theorem}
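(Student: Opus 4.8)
The plan is to establish each item by a matched pair of arguments: an explicit power allocation that attains the stated value inside $\mathcal{R}_{GIN}$, and a proof that no power allocation lets $\min\{T_1,T_2,T_3,T_4\}$ exceed that value (which, via the achievable sum-rate bound $S\le\min\{T_1,T_2,T_3,T_4\}$, caps the whole region). By the relabeling $1\leftrightarrow 2$ (equivalently $h_1\leftrightarrow h_2$, $P_1\leftrightarrow P_2$, $(U_i,V_i,W_i)\leftrightarrow(U_{3-i},V_{3-i},W_{3-i})$, $Y_1\leftrightarrow Y_2$), which carries $T_1$ to $T_2$ and case (i) to case (ii), it suffices to prove case (i).

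\emph{Achievability.} I would take $\gamma_1=1$ (so $\alpha_1=\beta_1=0$ and $X_1=W_1$) and $\alpha_2=1$ (so $\beta_2=\gamma_2=0$ and $X_2=U_2$), with all remaining component powers zero. Then $I_2=1$, and the component powers entering the $Rx_2$ constraints are $P_{U_2}=P_2$, $P_{W_1}=h_1^2P_1$, $P_{V_1}=P_{V_2}=0$, so of the fifteen constraints at $Rx_2$ only the two-user Gaussian-MAC constraints $R_{U_2}\le C(P_2)$, $R_{W_1}\le C(h_1^2P_1)$, $R_{U_2}+R_{W_1}\le C(h_1^2P_1+P_2)$ are active (adjoining $V_1$ or $V_2$ adds nothing, as they carry zero power and rate); the fifteen constraints at $Rx_1$ are vacuous since $R_{U_1}=R_{V_1}=R_{V_2}=R_{W_2}=0$. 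The corner point $R_{W_1}=C(h_1^2P_1)$, $R_{U_2}=C(P_2/(1+h_1^2P_1))$ then lies in $\mathcal{R}_{GIN}$ and has $S=C(h_1^2P_1+P_2)$, the sum capacity of the MAC seen at $Rx_2$, with the entire rate carried by $U_2$ and $W_1$, both destined for $Rx_2$.

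\emph{Upper bound within the region.} For the converse I would bound $\min_iT_i$ by $T_1$ alone. Expanding $I_2=1+h_1^2\alpha_1P_1+\gamma_2P_2$ gives the exact identity $I_2+\bar\gamma_2P_2+h_1^2\bar\alpha_1P_1=1+h_1^2P_1+P_2$, and similarly $I_1+\alpha_1P_1+h_2^2\gamma_2P_2=1+(1-\beta_1)P_1+h_2^2(1-\beta_2)P_2$; substituting these into the two $C(\cdot)$ terms of $T_1$ yields
\[
T_1 = C(h_1^2P_1+P_2)+\tfrac{1}{2}\log_2\!\left(\frac{1+(1-\beta_1)P_1+h_2^2(1-\beta_2)P_2}{I_1I_2}\right).
\]
Hence $T_1\le C(h_1^2P_1+P_2)$ is equivalent to $1+(1-\beta_1)P_1+h_2^2(1-\beta_2)P_2\le I_1I_2$. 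Expanding $I_1I_2=(1+h_2^2\alpha_2P_2+\gamma_1P_1)(1+h_1^2\alpha_1P_1+\gamma_2P_2)$ and cancelling, the difference of the two sides equals $\alpha_1P_1(h_1^2-1)+\gamma_2P_2(1-h_2^2)+(I_1-1)(I_2-1)$, which is a sum of nonnegative terms in case (i), where $h_1\ge 1$ and $h_2\le 1$. Therefore $\min_iT_i\le T_1\le C(h_1^2P_1+P_2)$ for every admissible power split, matching the achievable value, so the maximum achievable sum rate in $\mathcal{R}_{GIN}$ equals $C(h_1^2P_1+P_2)$ and is attained using only $U_2$ and $W_1$.

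\emph{Expected main obstacle.} The only subtle point is the choice of $T_1$. The ``symmetric'' candidate $T_3$ is not enough: taking $\beta_1=\beta_2=1$ gives $T_3=C(h_1^2P_1)+C(h_2^2P_2)$, which strictly exceeds $C(h_1^2P_1+P_2)$ whenever $h_2^2(1+h_1^2P_1)>1$, a situation compatible with case (i). What rescues $T_1$ is precisely the identity $I_2+\bar\gamma_2P_2+h_1^2\bar\alpha_1P_1=1+h_1^2P_1+P_2$, which collapses the $Rx_1$ contribution and reduces the whole claim to the single sign-definite inequality above; the rest is routine algebra.
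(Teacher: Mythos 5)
Your proposal is correct and follows essentially the same route as the paper's Appendix B: achievability is the MAC sum capacity at $Rx_2$ using only $U_2$ and $W_1$, and the converse within the region is obtained by showing $T_1\le C(h_1^2P_1+P_2)$ for every power split, which after clearing logarithms reduces to the same inequality the paper states (your decomposition of the difference as $\alpha_1P_1(h_1^2-1)+\gamma_2P_2(1-h_2^2)+(I_1-1)(I_2-1)$ is just a more explicit way of verifying it). The extra observation that $T_3$ alone would not suffice is a nice sanity check but not needed.
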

\begin{proof}
The proof of statement (1) is in Appendix \ref{mixed-proof}. The other statement can be proved similarly by swapping indices 1 and 2.
\end{proof}

\subsection{Low Interference}
\begin{theorem}
\begin{enumerate}
\item Let $T_i = t_i$ for $i = 1, 2, 3, 4$ when the power sharing fractions are $\alpha_1$, $\beta_1$, $\gamma_1$, $\alpha_2$, $\beta_2$, $\gamma_2$. Let $T_i = t_i^{'}$ when the power sharing fractions are $\alpha_1^{'} = \alpha_1$, $\beta_1^{'} = \beta_1 + \gamma_1$, $\gamma_1^{'} = 0$, $\alpha_2^{'} = \alpha_2$, $\beta_2^{'} = \beta_2 + \gamma_2$, $\gamma_2^{'} = 0$. Then $t_i^{'} \ge t_i$ for $i = 1, 2, 3, 4$ if $0 \leq h_1, h_2 \leq 1$. 
\item Messages $W_1$ and $W_2$ are not required to maximize the sum rate when $0 \leq h_1, h_2 \leq 1$.
\end{enumerate}
\end{theorem}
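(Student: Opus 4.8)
The plan is to prove statement~1 and then read off statement~2. Statement~1 compares $T_1,\dots,T_4$ at two power allocations that differ only by $(\gamma_1,\gamma_2)\to(0,0)$, with $\beta_i$ absorbing the freed power $\gamma_iP_i$; I would make this change one index at a time. Stage~I sends $\gamma_1\to0$, $\beta_1\to\beta_1+\gamma_1$ and leaves every index-$2$ fraction fixed; Stage~II does the analogous thing to index~$2$. Relabelling $1\leftrightarrow2$ in the channel merely permutes $\{T_1,T_2,T_3,T_4\}$ (indeed $T_1\leftrightarrow T_2$, while $T_3,T_4$ are fixed), so Stage~II is Stage~I applied to the relabelled channel, and it suffices to show Stage~I does not decrease any of $T_1,\dots,T_4$.

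For Stage~I note that $\gamma_1$ and $\beta_1$ occur only in the first $C(\cdot)$ term of each $T_i$ (the one with denominator $1+h_2^2\alpha_2P_2+\gamma_1P_1$), and in $T_2,T_4$ also in the second term. In $T_1$ and $T_3$ the first term's numerator contains neither $\gamma_1$ nor $\beta_1$ while its denominator strictly decreases, and the second term is untouched; since $C$ is increasing, $t_1'\ge t_1$ and $t_3'\ge t_3$ follow with no condition on $h_1,h_2$. In $T_2$ and $T_4$ the first term increases (numerator gains $\gamma_1P_1$, denominator loses $\gamma_1P_1$) while the second term decreases (numerator loses $h_1^2\gamma_1P_1$). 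Clearing the logarithms in $C(x)=\frac12\log_2(1+x)$ and cancelling, $t_i'\ge t_i$ reduces --- for both $i=2$ and $i=4$ --- to an inequality of the form
\[
1+h_1^2\alpha_1P_1+c\,P_2\ \ge\ h_1^2\bigl(1+h_2^2\alpha_2P_2\bigr),
\]
with $c=\alpha_2+\gamma_2$ for $T_2$ and $c=1$ for $T_4$; both hold because $h_1^2\le1$, $h_1^2h_2^2\le1$ and $c\ge\alpha_2$. This is the only point at which the low-interference hypothesis $0\le h_1,h_2\le1$ is used.

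Statement~2 is then immediate. The message $W_i$ is carried with power $\gamma_iP_i$, so $\gamma_1=\gamma_2=0$ is exactly the allocation that does not use $W_1$ or $W_2$. The achievable sum rate over $\mathcal{R}_{GIN}$ obeys the bound $S\le\min\{T_1,T_2,T_3,T_4\}$ of $(\ref{sumratebounds})$, which is attained by a suitable rate split; by statement~1 this bound does not decrease when $\gamma_1,\gamma_2$ are set to $0$ and their power is transferred to $V_1,V_2$. Hence, when $0\le h_1,h_2\le1$, the largest sum rate in $\mathcal{R}_{GIN}$ is already attained by some allocation with $\gamma_1=\gamma_2=0$, i.e.\ using only $U_1,V_1,U_2,V_2$.

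I do not expect a real obstacle. The one choice that matters is to make the power change one index at a time, so that at each stage two of the four bounds are manifestly monotone and the other two collapse to the single scalar inequality above; the rest is bookkeeping --- keeping straight which of $\alpha_i,\beta_i,\gamma_i,\bar\alpha_i,\bar\gamma_i$ move in each $T_i$ --- together with the elementary verification of that inequality, which is where $h_1,h_2\le1$ enters.
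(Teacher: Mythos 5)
Your proof is correct, and the reduction you perform checks out: under Stage~I ($\gamma_1\to 0$, $\beta_1\to\beta_1+\gamma_1$, index~$2$ fixed) the first terms of $T_1$ and $T_3$ only lose $\gamma_1P_1$ from their denominators while their second terms are untouched, and for $T_2$ and $T_4$ the log-cleared condition does collapse to $1+h_1^2\alpha_1P_1+cP_2\geq h_1^2(1+h_2^2\alpha_2P_2)$ with $c=\alpha_2+\gamma_2$ and $c=1$ respectively (because the first term's numerator is invariant under the change for $T_2$ and $T_4$, so everything cancels down to a single ratio comparison of the form $E/D$ versus $(E+h_1^2\gamma_1P_1)/(D+\gamma_1P_1)$). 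The $1\leftrightarrow 2$ relabelling indeed maps $T_1\leftrightarrow T_2$ and fixes $T_3,T_4$, and the hypothesis $0\le h_1,h_2\le1$ is symmetric, so Stage~II follows. Your route differs from the paper's in its decomposition: the paper sets $\gamma_1$ and $\gamma_2$ to zero \emph{simultaneously} and compares each $t_i$ with $t_i'$ directly, which makes $t_3\le t_3'$ trivial but forces it to verify a polynomial inequality with cross terms for $t_1$ (namely $A_2A_3(\gamma_1P_1+h_2^2\gamma_2P_2)\le \gamma_1P_1A_1A_3+\gamma_2P_2A_1A_2+\gamma_1P_1\gamma_2P_2A_1$, handled term by term) and a separate product inequality for $t_4$, with $t_2$ by symmetry. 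Your one-index-at-a-time approach buys a cleaner bookkeeping --- two of the four bounds are monotone for free at each stage, and the remaining two reduce to one scalar inequality --- at the cost of invoking the relabelling symmetry and composing two stages. Both arguments use the same underlying facts ($h_1^2\le1$, $h_1^2h_2^2\le1$, monotonicity of $C$). One shared caveat, equally present in the paper: deducing statement~2 from statement~1 implicitly uses that the maximum sum rate in the region equals $\min_i T_i$ (or at least that the optimum is governed by these four bounds); your parenthetical ``which is attained by a suitable rate split'' asserts this without proof, but the paper makes the identical leap, so this is not a gap relative to the paper's own standard.
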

\begin{proof}
See Appendix \ref{low-proof}.
\end{proof}

From the theorem above, it is clear that only 4 messages $U_1$, $U_2$, $V_1$, and $V_2$ are required (as in the IC) to maximize sum rate in the low interference regime (as mentioned in Table \ref{results-table}). Further, in \cite{annapureddy}, it is also proved that in the IC, for channel conditions satisfying
\begin{equation}
|h_1(1+h_2^2P_2)+h_2(1+h_1^2P_1)|\leq 1,
\label{lowint}
\end{equation}
encoding messages $U_1, U_2$ alone using Gaussian codebooks and treating interference as noise at each receiver is sum-capacity optimal. In \cite{chiachi}, this result is extended to the $X$ channel as well. Having shown that $\gamma_i=0, i\in\{1,2\}$, the same result also holds for $\mathcal{R}_{GIN}$.

\subsection{Strong Interference}
The conditions for strong interference are $1\leq h_1^2\leq P_2+1,
1\leq h_2^2\leq P_1+1$. Define $X_1'=h_1X_1, X_2'=h_2X_2$. Now the
equation (\ref{gin}) can be rewritten as
\begin{eqnarray}
\nonumber
Y_1&=&\frac{X_1'}{h_1}+X_2'+Z_1\\
Y_2&=&\frac{X_2'}{h_2}+X_1'+Z_2
\label{modgin}
\end{eqnarray}
In the strong interference regime, $\frac{1}{h_1}\leq 1,\frac{1}{h_2}\leq 1$.Thus, we now have an equivalent GIN in \textit{low interference} corresponding to the each strong interference GIN. $X_2^{'}$ now carries the \textit{direct} messages to $Rx_1$ and $X_1^{'}$ carries the \textit{cross private} message to $Rx_1$. The roles of $U_i$ and $W_i$, $i\in\{1,2\}$ interchange from their respective roles in the \textit{low interference} regime. Therefore, $\alpha_i = 0$ (i.e., $U_1$ and $U_2$ are not necessary) for maximizing the sum rate.

\subsection{Very Strong Interference}
\noindent In this case, we can make the following observations:\\
\noindent (1) The conclusions for strong interference that $\alpha_i=0$ holds here as well.\\

\noindent (2) For the model (\ref{modgin}) given above, let $P_1' =$ var$(X_1') = h_1^2P_1$, $P_2' =$ var$(X_2') = h_2^2P_2$, $h_1^{'} = 1/h_1$, and $h_2^{'} = 1/h_2$. We already know that, if
\begin{equation}
|h_1^{'}(1+h_2^{'2}P_2^{'})+h_2^{'}(1+h_1^{'2}P_1^{'})|\leq 1
\end{equation}
then this corresponds the sub-region in low interference discussed earlier. In this region, only messages $W_1$ and $W_2$ are sufficient to maximize the sum rate. The condition can be rewritten in terms of the original channel and power variables are
\begin{equation}
\left|\frac{1+P_2}{h_1}+\frac{1+P_1}{h_2}\right|\leq 1.
\end{equation}
Thus, there is a sub-region within the very strong interference
satisfying the above condition where only the 2 messages $W_1$ and
$W_2$ are necessary to maximize sum rate in $\mathcal{R}_{GIN}$.

\section{Conclusions}
Using an achievable rate region similar to the Han-Kobayashi region,
we obtain an achievable sum rate for a 2 $\times$ 2
GIN. We determine that at most 4 (out of 6) messages are sufficient
for maximizing the sum rate within this rate region for all channel
conditions. Also, in no case is more than one private message
transmitted from any transmitter. It is also observed that 2 messages
are sufficient in several cases -- mixed interference, and sub-regions
of low and very strong interference regions. 

\appendices
\section{Proof of Theorem 1}
\label{claim1proof}

We know $S=R_{U_1} + R_{V_1} + R_{W_1} + R_{U_2} + R_{V_2} + R_{W_2}$
is the sum of 6 different rates. The rate region constraints are
constraints on the sum of 1 or 2 or 3 or 4 of these rates. The 15
constraints at each receiver comprise of 4 single rate constraints, 6
on sum of 2 rates, 4 on sum of 3 rates and one on the sum of 4
rates. In order to obtain a bound on $S$, we can choose 2 or more
constraints from the 30 available constraints appropriately.

First, we observe that only one constraint needs to be chosen from
each group of 15 constraints (i.e. for each receiver). This is
because:
\begin{itemize}
\item The messages are independent given $Q$ by assumption.
\item If more than one constraint is chosen from the same group
  (corresponding to the same receiver), a single tighter constraint
  can be obtained in the following manner. If 2 constraints are chosen
  from equation (\ref{IN1}) corresponding to 2 disjoint subsets $C_1$
  and $C_2$ of $M_1$, we get the sum constraint
  $I(C_1;Y_1|\bar{C}_1,Q) + I(C_2;Y_1|\bar{C}_2,Q)$. However, $I(C_1
  \bigcup C_2; Y_1 | \overline{C_1 \bigcup C_2}, Q)$ is a tighter bound
  (due to the independent messages assumption) and is also one of the 15
  constraints.
\end{itemize}

Now, there are only 4 possible combinations of 2 constraints with one
from each group of 15 constraints. These are the 4 stated bounds
$T_1$, $T_2$, $T_3$, and $T_4$ in the theorem. A similar approach is
also used in \cite{ChoMotGarElg08} for reducing the number of sum
constraints in an interference channel setting.

\section{Proof of Theorem 2}
\label{mixed-proof}
In order to prove statement (1), it is sufficient to show that any one
of the $T_i$'s is less than or equal to $C(h_1^2P_1 + P_2)$. This is
because (i) any bound on a $T_i$ is also a bound on $S$, and (ii) we
know that $C(h_1^2P_1 + P_2)$ can be achieved using messages
$U_2$ and $W_1$ alone.

We can show that $T_1 \leq C(h_1^2P_1 + P_2)$ for any $\alpha_i, \beta_i, \gamma_i$ and $0 \leq h_2 \leq 1$, $h_1 \ge 1$. Proving $T_1 \leq C(h_1^2P_1 + P_2)$ can be shown (using the monotonicity of the $\log$ function) to be equivalent to showing 
\[
\frac{1 + \alpha_1 P_1 + \gamma_1P_1 + h_2^2 \alpha_2 P_2 + h_2^2 \gamma_2 P_2}{(1 + \gamma_1P_1 + h_2^2 \alpha_2 P_2)(1 + \gamma_2P_2 + h_1^2 \alpha_1 P_1)} \leq 1.
\]
This is the same as showing
\[
 \alpha_1 P_1 + h_2^2 \gamma_2 P_2 \leq (h_1^2 \alpha_1 P_1 + \gamma_2P_2)(1 + \gamma_1P_1 + h_2^2 \alpha_2 P_2).
\]
This condition is true for $0 \leq h_2 \leq 1$, $h_1 \ge 1$.

\section{Proof of Theorem 3}
\label{low-proof}
\noindent {\em Comparison of $t_1$ and $t_1^{'}$:} 
\[
t_1^{'} = C\left(\frac{\alpha_1P_1}{1+h_2^2\alpha_2P_2}\right) + C\left(\frac{P_2+h_1^2\bar{\alpha}_1P_1}{1+h_1^2\alpha_1P_1}\right)
\]
Proving $t_1 \leq t_1^{'}$ can be shown (using the monotonicity of the $\log$ function) to be equivalent to showing 
\[
\frac{A_1 + \gamma_1P_1 + h_2^2\gamma_2 P_2}{(A_2 + \gamma_1 P_1)(A_3 + \gamma_2 P_2)} \leq \frac{A_1}{A_2 A_3}, 
\]
where $A_1 = 1 + \alpha_1 P_1 + h_2^2 \alpha_2 P_2$, $A_2 = 1 + h_2^2 \alpha_2 P_2$, and $A_3 = 1 + h_1^2 \alpha_1 P_1$. Equivalently, we need to show
\[
A_2A_3(\gamma_1P_1 + h_2^2\gamma_2P_2) \leq \gamma_1P_1A_1A_3 + \gamma_2P_2 A_1A_2 + \gamma_1P_1\gamma_2P_2A_1.
\]
This is shown by comparing the first 2 terms using: (a) $A_1 \ge A_2$, (b) $A_1 \ge A_3$ when $0 \leq h_1 \leq 1$, and (c) $0 \leq h_2 \leq 1$.

\noindent {\em Comparison of $t_2$ and $t_2^{'}$:} This is similar to the comparison of $t_1$ and $t_1^{'}$ expect that the indices 1 and 2 are interchanged in the expressions for $t_2$ and $t_2^{'}$ when compared with $t_1$ and $t_1^{'}$.

\noindent {\em Comparison of $t_3$ and $t_3^{'}$:} 
\[
t_3^{'} = C\left(\frac{\alpha_1P_1+h_2^2\bar{\alpha}_2P_2}{1+h_2^2\alpha_2P_2}\right) + C\left(\frac{\alpha_2P_2+h_1^2\bar{\alpha}_1P_1}{1+h_1^2\alpha_1P_1}\right)
\]
Clearly, $t_3$ is always less than or equal to $t_3^{'}$ since only denominator is reduced (by setting $\gamma_1 = \gamma_2 = 0$) in both the arguments for $C(.)$ in $t_3^{'}$.

\noindent {\em Comparison of $t_4$ and $t_4^{'}$:} 
\[
t_4^{'} = C\left(\frac{P_1}{1+h_2^2\alpha_2P_2}\right) + C\left(\frac{P_2}{1+h_1^2\alpha_1P_1}\right).
\]
Proving $t_4 \leq t_4^{'}$ can be shown (using the monotonicity of the $\log$ function) to be equivalent to showing 
\[
\left(\frac{A_1 + h_2^2 \gamma_2P_2}{A_2 + \gamma_2 P_2}\right)\left( \frac{A_3 + h_1^2 \gamma_1P_1}{A_4 + \gamma_1 P_1}\right) \leq \frac{A_1}{A_2} . \frac{A_3}{A_4},
\]
where $A_1 = 1+ P_1 + h_2^2 \alpha_2P_2$, $A_2 = 1 + h_2^2 \alpha_2P_2$, $A_3 = 1+ P_2 + h_1^2 \alpha_1P_1$, and $A_4 = 1 + h_1^2 \alpha_1P_1$. This is true for $0 \leq h_1, h_2 \leq 1$.

\bibliographystyle{IEEEtran}
\bibliography{icc2012murali}
\end{document}